\newtheorem{thm}{Theorem}
\newtheorem{lem}{Lemma}
\newtheorem{conj}{Conjecture}
\newtheorem{cor}{Corollary}
\theoremstyle{definition}
\newtheorem{defn}{Definition}
\newtheorem{remark}{Remark}
\newcommand{\etal}{\textit{et al.}~}
\newcommand{\calX}{\mathcal{X}}
\newcommand{\calA}{\mathcal{A}}
\newcommand{\calY}{\mathcal{Y}}
\newcommand{\calC}{\mathcal{C}}
\newcommand{\calS}{\mathcal{S}}
\newcommand{\calH}{\mathcal{H}}
\newcommand{\calP}{\mathcal{P}}
\newcommand{\bD}{\mathbf{D}}
\newcommand{\bx}{\mathbf{x}}
\renewcommand{\tilde}{\widetilde}
\newcommand{\Reals}{\mathbb{R}}
\newcommand{\normX}[2]{\| #1 \|_{#2}}
\newcommand{\normEuc}[1]{\| #1 \|_2}
\newcommand{\ones}{\mathbf{1}}
\newcommand{\defined}{\triangleq}
\newcommand{\diag}[1]{\mathrm{diag}\left( #1 \right)}
\newcommand{\pxy}{p_{X,Y}}
\newcommand{\px}{p_X}
\newcommand{\py}{p_Y}
\newcommand{\ExpVal}[2]{\mathbb{E}\left[ #2 \right]}
\newcommand{\Pygx}{\mathbf{P}_{Y|X}}
\newcommand{\Px}{\mathbf{p}_X}
\newcommand{\Py}{\mathbf{p}_Y}
\newcommand{\Yt}{\tilde{Y}}
\newcommand{\by}{\mathbf{y}}
\newcommand{\bLambda}{\pmb{\Lambda}}
\newcommand{\EE}[1]{\ExpVal{}{#1}}
\newcommand{\whB}{\widehat{B}}
\newcommand{\xb}{\mathbf{x}}
\newcommand{\yb}{\mathbf{y}}
\newcommand{\fb}{\mathbf{f}}
\newcommand{\gb}{\mathbf{g}}
\newcommand{\bP}{\mathbf{P}}
\newcommand{\eye}{\mathbf{I}}
\newcommand{\bQ}{\mathbf{Q}}
\newcommand{\bU}{\mathbf{U}}
\newcommand{\bSigma}{\mathbf{\Sigma}}
\newcommand{\bV}{\mathbf{V}}
\newcommand{\bT}{\mathbf{T}}
\newcommand{\bbH}{\mathbf{H}}
\newcommand\independent{\protect\mathpalette{\protect\independenT}{\perp}}
\def\independenT#1#2{\mathrel{\rlap{$#1#2$}\mkern2mu{#1#2}}}
\let\emptyset\varnothing
\definecolor{light-gray}{gray}{.97}
\newmdtheoremenv[%
  backgroundcolor=light-gray, %
  linecolor=black,
  linewidth =1pt,%
  skipabove = 10pt,%
  skipbelow = 10pt
]{work}{Work in progress}
\newmdtheoremenv[%
  fontcolor=BrickRed,
  linecolor=black,
  linewidth =1pt,%
  skipabove = 10pt,%
  skipbelow = 10pt
]{todo}{To Do}
\newmdtheoremenv[%
  fontcolor=BrickRed,
  linecolor=black,
  linewidth =1pt,%
  skipabove = 10pt,%
  skipbelow = 10pt
]{disc}{Disclaimer}
\title{An Exploration of the Role of Principal Inertia Components in Information
Theory}
\author{Flavio P. Calmon, Mayank Varia, Muriel M\'edard
\thanks{F.~P.~Calmon and M.~M\'edard are with the Research Laboratory
of Electronics at the Massachusetts Institute of Technology, Cambridge, MA (email:
flavio@mit.edu; medard@mit.edu).

M.~Varia is with the MIT Lincoln Laboratory, Lexington, MA (e-mail:
mayank.varia@ll.mit.edu).

This work is sponsored by the Intelligence Advanced Research Projects
Activity under Air Force Contract FA8721-05-C-002. Opinions,
interpretations, conclusions and recommendations are those of the authors
and are not necessarily endorsed by the United States Government.
}
}
\begin{document}
\maketitle

\begin{abstract}
    The principal inertia components of the joint distribution of two random
    variables $X$ and $Y$ are inherently connected to how an observation of $Y$
    is statistically related to a hidden variable
    $X$. In this paper, we explore this connection within an information theoretic
    framework. We show that, under certain symmetry conditions, the principal
    inertia components play an important role in estimating one-bit functions of
    $X$, namely $f(X)$, given an observation of $Y$. In particular, the
    principal inertia components bear an interpretation as filter coefficients
    in the linear transformation of $p_{f(X)|X}$ into $p_{f(X)|Y}$. This
    interpretation naturally leads to the conjecture that the mutual information
    between $f(X)$ and $Y$ is maximized when all the principal inertia
    components have equal value. We also study the role of the principal inertia
    components in the Markov chain $B\rightarrow X\rightarrow Y\rightarrow
    \whB$, where $B$ and $\whB$ are binary random variables. We illustrate our
    results for the setting where  $X$ and $Y$ are binary strings
    and $Y$ is the result of sending $X$ through an additive noise binary
    channel.
  \end{abstract}


\section{Introduction}

Let $X$ and $Y$ be two discrete random variables with finite support $\calX$ and
$\calY$, respectively.  $X$ and $Y$ are related through a conditional
distribution (channel), denoted by $p_{Y|X}$. For each $x\in \calX$,
$p_{Y|X}(\cdot|x)$ will be a vector on the $|\calY|$-dimensional simplex, and
the position of these vectors on the simplex will determine the nature of the
relationship between $X$ and $Y$.  If $p_{Y|X}$ is fixed, what can be
learned about $X$ given an observation of $Y$, or the degree of accuracy of what
can be inferred about $X$ \textit{a posteriori}, will  then depend on the
marginal distribution $p_X$. The value $p_X(x)$, in
turn, ponderates the corresponding vector $p_{Y|X}(\cdot|x)$ akin to a mass.  
As a simple example, if $|\calX|=|\calY|$ and the vectors $p_{Y|X}(\cdot|x)$ are located
on distinct corners of the simplex, then $X$ can be perfectly learned from $Y$.
As another example, assume that the vectors $p_{Y|X}(\cdot|x)$ can be grouped into two
clusters located near opposite corners of the simplex. If the sum of
the masses induced by $p_X$ for each cluster is approximately $1/2$, then one
may expect to reliably infer on the order of 1 unbiased bit of $X$ from an
observation of $Y$.

The above discussion naturally leads to considering the use of techniques
borrowed from classical mechanics.  For a given inertial frame of reference, the
mechanical properties of a collection of distributed point masses can be
characterized by the moments of inertia of the system. The moments of inertia
measure how the weight of the point masses is distributed around the center of
mass. An analogous metric exists for the distribution of the vectors $p_{Y|X}$
and masses $p_X$ in the simplex, and it is the subject of study of a branch of
applied statistics called \textit{correspondence analysis}
(\cite{greenacre_theory_1984,greenacre_geometric_1987}). In correspondence
analysis, the joint distribution  $p_{X,Y}$  is decomposed in terms of the
\textit{principal inertia components}, which, in some sense, are analogous to
the moments of inertia of a collection of point masses. In mathematical
probability, the study of principal inertia components dates back to Hirschfeld
\cite{hirschfeld_connection_1935}, Gebelein \cite{gebelein_statistische_1941},
Sarmanov \cite{sarmanov1962maximum} and R\'enyi \cite{renyi_measures_1959}, and
similar analysis have also recurrently appeared in the information theory and
applied probability literature. We present the formal definition of principal
inertia components and a short review of the relevant literature in the next
section\footnote{We encourage the readers that are unfamiliar with the topic to
  skip ahead and read Section \ref{sec:notation} and then return to this
introduction.}.


The distribution of the vectors $p_{Y|X}$ in the simplex or, equivalently, the
principal inertia components of the joint distribution of $X$ and $Y$,
is inherently connected to how an observation of $Y$ is statically related to $X$.   
In this paper, we explore
this connection within an information theoretic framework. We show that, under
certain assumptions, the principal inertia components play an important part in
estimating a one-bit function of $X$, namely $f(X)$ where $f:\calX\rightarrow
\{0,1\}$, given an observation of $Y$: they can be understood  as the filter
coefficients in the linear transformation of $p_{f(X)|X}$ into $p_{f(X)|Y}$.
Alternatively, the principal inertia components can bear an interpretation as
noise, in particular when $X$ and $Y$ are binary strings.
We also show that maximizing the principal inertia components is equivalent to
maximizing the first-order term of the Taylor series expansion of certain convex
measures of information between $f(X)$ and $Y$. We conjecture that, for symmetric
distributions of $X$ and $Y$ and a given upper bound on the value of the
largest principal inertia component, $I(f(X);Y)$ is maximized when all the
principal inertia components have the same value as the largest principal
inertia component. This is equivalent to $Y$ being the result of passing $X$
through a $q$-ary  symmetric channel. This conjecture, if proven, would imply
that the conjecture made by Kumar and Courtade in \cite{kumar_which_2013}.

Finally, we study the Markov chain $B\rightarrow X \rightarrow Y \rightarrow
\whB$, where $B$ and $\whB$ are binary random variables, and the role of the
principal inertia components in characterizing the relation between $B$ and
$\whB$. We show that that this relation is linked to solving a non-linear
maximization problem, which, in turn, can be solved when $\whB$ is an unbiased
estimate of $B$, the joint distribution of $X$ and $Y$ is symmetric and
$\Pr\{B=\whB=0\}\geq \EE{B}^2$. We illustrate this result for the setting where $X$ is a
binary string and $Y$ is the result of sending $X$ through a memoryless binary
symmetric channel. We note that this is a similar setting to the one considered
by Anantharam \etal in \cite{anantharam_hypercontractivity_2013}.

The rest of the paper is organized as follows.  Section \ref{sec:notation}
presents the notation and definitions used in this paper, and discusses some of
the related literature. Section \ref{sec:conf} introduces the notion of conforming
distributions and ancillary results. Section \ref{sec:functions} presents
results concerning the role of the principal inertia components in inferring
one-bit functions of $X$ from an observation of $Y$, as well as the linear
transformation of $p_X$ into $p_Y$ in certain symmetric settings. We argue
that, in such settings, the principal inertia components can be viewed as filter
coefficients in a linear transformation. In particular, results for binary
channels with additive noise are derived using techniques inspired by 
Fourier analysis of Boolean functions. Furthermore, Section \ref{sec:functions}
also introduces a conjecture that encompasses the one made by Kumar and Courtade
in \cite{kumar_which_2013}. Finally, Section \ref{sec:estimators} provides
further evidence for this conjecture by investigating the  Markov chain
$B\rightarrow X\rightarrow Y \rightarrow \whB$ where $B$ and $\whB$ are binary
random variables.

\section{Principal Inertia Components}
\label{sec:notation}
\subsection{Notation}
\label{sec:notA}

We denote matrices by bold capitalized letters (e.g. $\mathbf{A}$) and vectors
by bold lower case letters (e.g. $\xb$). The $i$-th component of a vector $\xb$
is denoted by $\xb_i$. Random variables are denoted by upper-case letters (e.g.
$X$ and $Y$). We define $[n]\defined \{1,\dots,n\}$.

Throughout the text we assume that $X$ and $Y$ are discrete random variables
with  finite support sets $\calX$ and $\calY$. Unless otherwise specified, we
let, without loss of generality, $\calX=[m]$ and $\calY=[n]$. The joint
distribution matrix of $\bP$ is an $m\times n$ matrix with $(i,j)$-th entry
equal to $p_{X,Y}(i,j)$. We denote by $\Px$ (respectively, $\Py$) the vector
with $i$-th entry equal to $p_X(i)$ (resp. $p_Y(i)$).
$\mathbf{D}_X=\diag{\Px}$ and $\mathbf{D}_Y=\diag{\Py}$ are matrices with diagonal
entries equal to $\Px$ and $\Py$, respectively, and all other entries equal to
0. The matrix $\Pygx\in \Reals^{m\times n}$ denotes the matrix with $(i,j)$-th
entry equal to $p_{Y|X}(j|i)$. Note that  $\bP=\bD_X\Pygx$.

For a given joint distribution matrix $\bP$, the set of all vectors contained in
the unit cube in $\Reals^n$ that satisfy $\normX{\bP\bx}{1}=a $ is given by
\begin{equation}
  \label{eq:Cdef}
\calC^n(a,\bP)\defined \{\xb\in \Reals^n|0\leq \xb_i \leq 1, \normX{\bP\bx}{1}=a\}.
\end{equation}
The set of all $m\times n$ probability distribution matrices is given by
$\calP_{m,n}$. 

For $x^n\in \{-1,1\}^n$ and
$\calS\subseteq [n]$, $\chi_\calS(x^n)\defined\prod_{i\in \calS}x_i$ (we
consider $\chi_\emptyset(x) = 1$). For $y^n\in \{-1,1\}^n$, $a^n=x^n\oplus y^n$
is the vector resulting from the entrywise product of $x^n$ and $y^n$, i.e. $a_i
= x_iy_i$, $i\in [n]$.

Given two probability distributions $p_{X}$ and $q_X$ and $f(t)$ a smooth convex function
defined for $t>0$ with $f(1)=0$, the $f$-divergence is
defined as \cite{_information_2004} 
    \begin{equation}
        D_f(p_X||q_X) \defined \sum_x q_{X}(x) f \left( \frac{p_X(x)}{q_X(x)}
        \right).
    \end{equation}
The $f$-information is given by
    \begin{equation}
        I_f(X;Y)\defined D_f(p_{X,Y}||p_Xp_Y).
    \end{equation}
When $f(x)=x\log(x)$, then $I_f(X;Y)=I(X;Y)$. A study of information metrics
related to $f$-information  was given in \cite{polyanskiy2010arimoto} in the
context of channel coding converses.

\subsection{Principal Inertia Components and Decomposing the Joint
  Distribution Matrix}
  \label{sec:PIC}

We briefly define in this section the \textit{principal inertia decomposition}
of the joint distribution matrix $\bP$. The term ``principal inertia'' is
borrowed from the correspondence analysis literature
\cite{greenacre_theory_1984}. The study of the principal
inertia components of the joint distribution of two random variables dates back to
Hirshfield \cite{hirschfeld_connection_1935}, Gebelein
\cite{gebelein_statistische_1941}, Sarmanov \cite{sarmanov1962maximum}  and R\'enyi \cite{renyi_measures_1959}, having appeared in the work of
Witsenhausen \cite{witsenhausen_sequences_1975}, Ahlswede and G\'acs
\cite{ahlswede_spreading_1976} and, more recently, Anantharam \etal
\cite{anantharam_maximal_2013}, Polyanskiy
\cite{polyanskiy_hypothesis_2012} and Calmon \etal \cite{calmon_bounds_2013},
among others.
For an overview, we refer the reader to
\cite{anantharam_maximal_2013,calmon_bounds_2013}.

\begin{defn}
  We call the singular value decomposition $\mathbf{D}_X^{-1/2}\bP
  \mathbf{D}_Y^{-1/2}=\mathbf{U\Sigma V}^T$ the \textit{principal inertia decomposition}
  of $X$ and $Y$, where $\bSigma$ is a diagonal matrix with $\diag{\bSigma}=(1,\sigma_1,\dots,\sigma_d)$ and
  $d=\min(m,n)-1$.  The values $\sigma_i^2$, $i=1,\dots,d$, 
are called the \textit{principal inertia components} of $X$ and $Y$. In particular
$\rho_m(X;Y)=\sigma_1$, where $\rho_m(X;Y)$ denotes the maximal correlation coefficient of $X$ and
$Y$. The maximal correlation coefficient, in turn, is given by
\begin{equation*}
  \rho_m(X;Y) \defined \sup \left\{ \EE{f(X)g(Y)}|
  \EE{f(X)}=\EE{g(Y)}=0,\EE{f(X)^2}=\EE{g(X)^2}=1 \right\}.
\end{equation*}
\end{defn} 
The values $\sigma_1,\dots,\sigma_d$ in the previous definition are the spectrum
of the conditional expectation operator $(Tf)(x)\defined \EE{f(Y)|X=x}$, where
$f:\calY\rightarrow \Reals$ \cite{renyi_measures_1959}. Indeed, the spectrum of
$T$ and the principal inertia components are entirely equivalent when $X$ and
$Y$ have finite support sets. Nevertheless, the reader should note
that the analysis based on the conditional expectation operator lends itself to
more general settings, including random variables with continuous support. We do
not pursue this matter further here, since our focus is on discrete random
variables with finite support.

The principal inertia components satisfy the data processing inequality (see, for example,
\cite{polyanskiy_hypothesis_2012,calmon_bounds_2013,kang_new_2011}): if $X
\rightarrow Y \rightarrow Z$ and $\sigma_i$ are the principal inertia components
of $X$ and $Y$ and $\tilde{\sigma_i}$ are the principal inertia components of $X$
and $Z$, then $\sum_{i=1}^k \tilde{\sigma_i}^2\leq\sum_{i=1}^k \sigma_i^2$ for all
$k$. Furthermore, for a fixed marginal distribution $p_X$, $\sum_{i=1}^k
\sigma_i^2$ is convex
in $p_{Y|X}$. Note the joint distribution matrix $\bP$ as can be written as 
\begin{align}
  \label{eq:PIDecomp}
  \bP = \mathbf{D}_X^{1/2}\mathbf{U\Sigma V}^T \mathbf{D}_Y^{1/2}.
\end{align}

\section{Conforming distributions}
\label{sec:conf}

In this paper we shall recurrently use probability distribution matrices that
are symmetric and positive-semidefinite. This motivates the following
definition.

\begin{defn}
  A joint distribution $p_{X,Y}$ is said to be \textit{conforming} if the
  corresponding matrix $\bP$ satisfies $\bP=\bP^T$ and $\bP$ is
  positive-semidefinite.
\end{defn}

\begin{remark}
If $X$ and $Y$ have a conforming joint distribution, then they have the same
marginal distribution.  Consequently, $\mathbf{D}\defined
\mathbf{D}_X=\mathbf{D}_Y$, and $\bP=\mathbf{D}^{1/2}\mathbf{U\Sigma U}^T
\mathbf{D}^{1/2}$. 
\end{remark}

Symmetric channels\footnote{We say that a channel is
  symmetric if $\calX=\calY=[m]$ and $p_{Y|X}(i|j)=p_{Y|X}(j|i)$ $\forall
i,j\in [m]$.} are closely related to conforming probability distributions. 
We shall illustrate this relation in the next lemma and in Section
\ref{sec:functions}.

\begin{lem}
  \label{lem:conf}
    If $\bP$ is conforming, then the corresponding conditional distribution
    matrix $\Pygx$ is positive semi-definite. Furthermore, for any symmetric channel
    $\Pygx=\Pygx^T$, there is an input distribution
    $\Px$ (namely, the uniform distribution) such that the 
    principal inertia components of $\bP=\bD_X\Pygx$ correspond to the square of the
    eigenvalues of $\Pygx$. In this case, if $\Pygx$ is also
    positive-semidefinite, then $\bP$ is conforming.
\end{lem}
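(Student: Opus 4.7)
The plan is to establish each of the three assertions in turn using the factorization $\bP=\bD\,\Pygx$ and the explicit form of the principal inertia decomposition.

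For the first claim, I would start from $\bP=\bP^T\succeq 0$ and conclude that the symmetric congruence $\bD^{-1/2}\bP\bD^{-1/2}$ is also symmetric and positive-semidefinite, since conjugation by the invertible matrix $\bD^{-1/2}$ preserves semidefiniteness. Writing
\[
\Pygx=\bD^{-1}\bP=\bD^{-1/2}\!\left(\bD^{-1/2}\bP\bD^{-1/2}\right)\!\bD^{1/2},
\]
one sees that $\Pygx$ is similar to a symmetric PSD matrix, hence shares its real non-negative spectrum; this is the sense in which $\Pygx$ is PSD (its symmetric part in the $\bD$-inner product is PSD, and all eigenvalues are real and non-negative).

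For the second claim, I would plug the uniform input $\Px=\frac{1}{m}\ones$ into the decomposition (\ref{eq:PIDecomp}). Then $\bD_X=\bD_Y=\frac{1}{m}\eye$, so the whitened joint distribution matrix simplifies to
\[
\bD_X^{-1/2}\bP\,\bD_Y^{-1/2}=\sqrt{m}\cdot\frac{1}{m}\Pygx\cdot\sqrt{m}=\Pygx.
\]
Since $\Pygx=\Pygx^T$, any eigendecomposition $\Pygx=\bU\bLambda\bU^T$ with orthogonal $\bU$ yields an SVD $\Pygx=\bU\,|\bLambda|\,(\bU\,\mathrm{sgn}(\bLambda))^T$. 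Reading off the singular values gives $\sigma_i=|\lambda_i|$, and hence the principal inertia components are $\sigma_i^2=\lambda_i^2$. The leading singular value $1$ corresponds to the Perron eigenvalue of the stochastic matrix $\Pygx$, which is consistent with the convention $\diag{\bSigma}=(1,\sigma_1,\dots,\sigma_d)$.

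For the third claim, I would observe that with the uniform input we have $\bP=\frac{1}{m}\Pygx$. If in addition $\Pygx$ is symmetric and positive-semidefinite, then $\bP$ inherits both properties, so $\bP=\bP^T\succeq 0$, i.e.\ $\bP$ is conforming. I do not anticipate a real obstacle here: the only subtle point is the first part, where one must be careful that ``$\Pygx$ is PSD'' refers to a similarity with a symmetric PSD matrix rather than to intrinsic symmetry of $\Pygx$, since in general $\bD^{-1}\bP$ need not equal its transpose. Everything else reduces to direct substitution into the principal inertia decomposition.
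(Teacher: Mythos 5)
Your proposal is correct and follows essentially the same route as the paper: whiten with $\bD^{-1/2}$, read off the eigenvalues/singular values, and note that the uniform input forces $\bD_X=\bD_Y=\frac{1}{m}\eye$ so that $\bP$ and $\Pygx$ are proportional. The one place you are more careful than the paper is in the first part, where you explicitly observe that $\Pygx=\bD^{-1}\bP$ is only \emph{similar} to a symmetric PSD matrix rather than itself symmetric, so ``PSD'' should be read as ``real non-negative spectrum''; the paper reaches the same conclusion by writing $\Pygx=\bQ\bSigma\bQ^{-1}$ with $\bQ=\bD^{-1/2}\bU$ and leaves this interpretive point implicit.
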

\begin{proof}
  Let $\bP$ be conforming and $\calX=\calY=[m]$. Then $\Pygx = \mathbf{D}^{-1/2}\mathbf{U\Sigma U}^T
    \mathbf{D}^{1/2}=\mathbf{Q}\bSigma\mathbf{Q}^{-1}$, where
    $\mathbf{Q}=\bD^{-1/2}\bU$. It follows that $\diag{\bSigma}$ are the
    eigenvalues of $\Pygx$, and, consequently, $\Pygx$ is positive
    semi-definite.

    Now let $\Pygx=\Pygx^T=\bU\bLambda\bU^T$. The entries of $\bLambda$ here are
    the eigenvalues of $\Pygx$ and not necessarily positive. Since $\Pygx$ is
    symmetric, it is also doubly stochastic, and for $X$ uniformly distributed
    $Y$ is also uniformly distributed. Therefore, $\bP$ is symmetric, and
    $\bP=\bU\bLambda\bU^T/m$. It follows directly that the principal inertia
    components of $\bP$ are exactly the diagonal entries of $\bLambda^2$, and if
    $P_{Y|X}$ is positive-semidefinite then $\bP$ is conforming.
\end{proof}

The $q$-ary symmetric channel, defined below, is of particular interest to some
of the results derived in the following sections. 

\begin{defn}
  The $q$-ary symmetric channel with crossover probability $\epsilon\leq1-q^{-1}$, also
    denoted as  $(\epsilon,q)$-SC, is
    defined as the channel with input $X$ and output $Y$ where
    $\calX=\calY=[q]$ and
    \begin{align*}
        p_{Y|X}(y|x) = 
            \begin{cases}
                1-\epsilon& \mbox{if } x=y\\
                 \frac{\epsilon}{q-1}& \mbox{if } x\neq y.
            \end{cases}
    \end{align*}
\end{defn}

Let $X$ and $Y$ have a conforming joint distribution matrix with
$\calX=\calY=[q]$ and principal inertia components
$\sigma_1^2,\dots,\sigma_{d}^2$. The following  lemma shows that conforming $\bP$ can be
transformed into the joint distribution of a $q$-ary symmetric channel with input
distribution $p_X$ by setting $\sigma_1^2=\sigma_2^2=\dots=\sigma_d^2$, i.e. making
all principal inertia components equal to the largest one.

\begin{lem}
  \label{lem:qary}
  Let $\bP$ be a conforming joint distribution matrix of $X$ and
  $Y$, with $X$ and $Y$ uniformly distributed, $\calX=\calY=[q]$,
  $\bP=q^{-1}\mathbf{U\Sigma U}^T$ and
  $\mathbf{\Sigma}=\diag{1,\sigma_1,\dots,\sigma_d}$. For
  $\tilde{\mathbf{\Sigma}}=\diag{1,\sigma_1,\dots,\sigma_1}$, let $X$ and
  $\tilde{Y}$ have joint distribution $\tilde{\bP}=q^{-1}\mathbf{U\widetilde{\Sigma}
U}^T $.  Then,  $\tilde{Y}$ is the result of passing $X$ through a
$(\epsilon,q)$-SC, with
  \begin{align}
    \label{eq:epsilon}
    \epsilon = \frac{(q-1)(1-\rho_m(X;Y))}{q}.
  \end{align}
\end{lem}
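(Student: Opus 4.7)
The plan is to exploit the very special structure of $\tilde{\bSigma}$: since all the nontrivial diagonal entries are equal to $\sigma_1$, one can write
\begin{equation*}
\tilde{\bSigma} = \sigma_1 \eye + (1-\sigma_1)\, e_1 e_1^T,
\end{equation*}
where $e_1$ is the first standard basis vector. Conjugating by $\mathbf{U}$ then gives the rank-one perturbation
\begin{equation*}
\mathbf{U}\tilde{\bSigma}\mathbf{U}^T = \sigma_1 \eye + (1-\sigma_1)\,\mathbf{u}_1 \mathbf{u}_1^T,
\end{equation*}
where $\mathbf{u}_1$ is the first column of $\mathbf{U}$. The computation of $\tilde{\bP}$ thus reduces to identifying $\mathbf{u}_1$ explicitly.

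The key observation is that $\mathbf{u}_1$ is determined by the principal inertia decomposition itself. Recall that $\bD^{-1/2}\bP\bD^{-1/2} = \mathbf{U\Sigma U}^T$ and that the singular value $1$ in position $(1,1)$ of $\bSigma$ corresponds to the eigenvector $\bD^{1/2}\ones$ (since $\bD^{-1/2}\bP\bD^{-1/2} \cdot \bD^{1/2}\ones = \bD^{-1/2}\bP\ones = \bD^{-1/2}\Px = \bD^{1/2}\ones$). Under the uniform marginal $\bD = q^{-1}\eye$, this eigenvector is simply $q^{-1/2}\ones$, so after normalization $\mathbf{u}_1 = q^{-1/2}\ones$ and therefore $\mathbf{u}_1\mathbf{u}_1^T = q^{-1}\mathbf{J}$, with $\mathbf{J}$ the all-ones matrix.

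Substituting back, I obtain
\begin{equation*}
\tilde{\bP} = q^{-1}\mathbf{U}\tilde{\bSigma}\mathbf{U}^T = q^{-1}\sigma_1 \eye + q^{-2}(1-\sigma_1)\,\mathbf{J},
\end{equation*}
and, since $\bD = q^{-1}\eye$, the induced conditional matrix is $\tilde{\Pygx} = q\tilde{\bP} = \sigma_1 \eye + q^{-1}(1-\sigma_1)\,\mathbf{J}$. Reading entries, $\tilde{p}_{Y|X}(x\mid x) = \sigma_1 + q^{-1}(1-\sigma_1)$ and $\tilde{p}_{Y|X}(y\mid x) = q^{-1}(1-\sigma_1)$ for $y\neq x$, which is exactly the transition rule of a $q$-ary symmetric channel with crossover probability $\epsilon$ satisfying $\epsilon/(q-1) = q^{-1}(1-\sigma_1)$, i.e., $\epsilon = (q-1)(1-\sigma_1)/q$. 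Using $\sigma_1 = \rho_m(X;Y)$ from the definition of principal inertia components finishes the proof.

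There is essentially no hard step here: the entire argument is a one-line matrix identity once $\mathbf{u}_1$ is pinned down. The only subtlety worth flagging is checking that the resulting $\tilde{\bP}$ is indeed a valid probability distribution matrix, which requires $0 \le \sigma_1 \le 1$ (so that the off-diagonal entries $q^{-2}(1-\sigma_1)$ are non-negative and $\epsilon \le 1-q^{-1}$); this holds because $\sigma_1 = \rho_m(X;Y) \in [0,1]$ for any joint distribution.
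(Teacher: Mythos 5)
Your proposal is correct and follows essentially the same route as the paper: identify the first column of $\mathbf{U}$ as $q^{-1/2}\ones$, expand $q^{-1}\mathbf{U}\tilde{\bSigma}\mathbf{U}^T$ as $q^{-1}\sigma_1\eye + q^{-2}(1-\sigma_1)\ones\ones^T$, and read off the $(\epsilon,q)$-SC entries. The only difference is cosmetic — you make the rank-one-perturbation structure and the eigenvector justification for $\mathbf{u}_1$ explicit, while the paper states them more tersely.
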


\begin{proof}
  The first column of $\mathbf{U}$ is $\Px^{1/2}$ and, since $X$ is
  uniformly distributed, $\Px^{1/2}=q^{-1/2}\ones$. Therefore
\begin{align}
    \tilde{\bP}&=q^{-1}\mathbf{U\tilde{\Sigma} U}^T \nonumber \\
    &=q^{-1} \sigma_1 \eye +q^{-2}(1-\sigma_1)\ones\ones^T. \label{eq:qarydist}
\end{align}    
Consequently, $\tilde{\bP}$ has diagonal entries equal to 
$(1+(q-1)\sigma_1)/q^2$ and all other entries equal to $(1-\sigma_1)/q^2$. The
result follows by noting that  $\sigma_1=\rho_m(X;Y)$. 
\end{proof}

\begin{remark}
  For $X$, $Y$ and $\tilde{Y}$ given in the previous lemma, a natural question
  that arises is whether $Y$ is a degraded version of $\Yt$, i.e. $X\rightarrow
  \tilde{Y}\rightarrow Y$. Unfortunately, this is \textbf{not true} in general,
  since the matrix $\mathbf{U\widetilde{\Sigma}^{-1}\Sigma U^T}$ does not
  necessarily contain only positive entries, although it is doubly-stochastic.
  However, since the principal inertia components of $X$ and $\tilde{Y}$ upper
  bound the principal inertia components of $X$ and $Y$, it is natural to expect
  that, at least in some sense, $\tilde{Y}$ is more informative about $X$ than
  $Y$. This intuition is indeed correct for certain estimation problems where a
  one-bit function of $X$ is to be inferred from a single observation $Y$ or
  $\tilde{Y}$, and will be investigated in the next section.
\end{remark}


\section{One-bit Functions and Channel Transformations}
\label{sec:functions}

Let $B\rightarrow X \rightarrow Y$, where $B$ is a binary random variable. When
$X$ and $Y$ have a conforming probability distribution, the principal inertia components of $X$ and
$Y$ have a particularly interesting interpretation: they can be understood as
the filter coefficients in the linear transformation of $p_{B|X}$ into
$p_{B|Y}$. In order to see why this is the case, consider the joint
distribution of $B$ and $Y$, denoted here by $\bQ$, given by
\begin{align}
  \label{eq:OBF}
  \bQ = [\fb~ ~1-\fb]^T\bP= [\fb~ ~1-\fb]^T\bP_{X|Y}\bD_Y=[\gb~ ~ 1-\gb]^T\bD_Y,
\end{align}
where $\fb\in \Reals^m$ and $\gb\in \Reals^n$ are column-vectors with
$\fb_i=p_{B|X}(0|i)$ and $\gb_j=p_{B|Y}(0|j)$. In particular, if $B$ is a
deterministic function of $X$, $\fb\in \{0,1\}^m$.

If $\bP$ is conforming and $\calX=\calY=[m]$, then $\bP=\bD^{1/2}\bU\bSigma\bU^T\bD^{1/2}$,
where $\bD=\bD_X=\bD_Y$. Assuming $\bD$ fixed, the joint distribution $\bQ$ is entirely specified by the
linear transformation of $\fb$ into $\gb$. Denoting $\bT\defined \bU^T\bD^{1/2}$, this
transformation is done in three steps:
\begin{enumerate}
  \item (Linear transform) $\widehat{\fb}\defined \bT\fb$,
  \item (Filter) $\widehat{\gb}\defined \bSigma\widehat{\fb}$, where the diagonal of
    $\bSigma^2$ are the principal inertia components of $X$ and $Y$, 
  \item (Inverse transform)  $\gb= \bT^{-1}\widehat{\gb}$.
\end{enumerate}
Note that $\widehat{\fb}_1=\widehat{\gb}_1=1-\EE{B}$ and  $\widehat{\gb}=
\bT\gb$. Consequently, the principal inertia coefficients of $X$ and $Y$ bear an interpretation
as the filter coefficients in the linear transformation of  $p_{B|X}(0|\cdot)$
into $p_{B|Y}(0|\cdot)$.




A similar interpretation can be made for symmetric channels, where
$\Pygx=\Pygx^T=\bU\bLambda\bU^T$ and $\Pygx$ acts as the matrix of the linear
transformation of $\Px$ into
$\Py$.  Note that $\Py = \Pygx\Px$, and,
consequently, $\Px$ is transformed into $\Py$ in the same three steps as before:
\begin{enumerate}
  \item (Linear transform) $\widehat{\Px}= \bU^T\Px$,
  \item (Filter) $\widehat{\Py}=\bLambda\widehat{\Px}$, where the diagonal of
    $\bLambda^2$ are the principal inertia components of $X$ and $Y$ in the
    particular case when $X$ is uniformly distributed (Lemma \ref{lem:conf}), 
  \item (Inverse transform)  $\Py= \bU\widehat{\Py}$.
\end{enumerate}
From this perspective, the vector $\mathbf{z}=\bU\bLambda\ones m^{-1/2}$ can be
understood as a proxy for the ``noise effect'' of the channel. Note that
$\sum_i\mathbf{z}_i=1$. However, the entries of $\mathbf{z}$ are not necessarily
positive, and $\mathbf{z}$ might not be a \textit{de facto} probability
distribution.

We now illustrate these ideas by investigating binary channels with additive
noise in the next section, where $\bT$ will correspond to the
well-known Walsh-Hadamard transform matrix.  

\subsection{Example: Binary Additive Noise Channels}
\label{sec:binAdd}

In this example, let $\calX^n, \calY^n\subseteq \{-1,1\}^n$ be the support sets of $X^n$ and
$Y^n$, respectively.  We define two sets of channels that transform
$X^n$ into $Y^n$. In each set definition, we  assume the conditions for
$p_{Y^n|X^n}$ to be a valid probability distribution (i.e. non-negativity and unit
sum).

 \begin{defn}
    The set of \textit{parity-changing channels} of block-length $n$, denoted
    by $\calA_{n}$, is defined as:
    \begin{align}
      \calA_{n} \defined \left\{
        p_{Y^n|X^n} \mid \forall
        \calS\subseteq[n],~\exists c_{\calS} \in
        [-1,1]  \mbox{ s.t. }
        \EE{\chi_\calS(Y^n)|X^n}=c_{\calS}\chi_\calS(X^n)  \right\}. \label{eq:PAchannels}
    \end{align}
    The set of all \textit{binary additive noise channels} is given by
        \begin{align*}
          \mathcal{B}_{n} \defined \left\{
        p_{Y^n|X^n} \mid  \exists Z^n \mbox{ s.t. }
        Y^n=X^n\oplus Z^n, \mbox{ supp}(Z^n)\subseteq\{-1,1\}^n, Z^n\independent X^n \right\}.
    \end{align*}
 \end{defn}
 The definition of parity-changing channels is inspired by results from the
 literature on Fourier analysis of Boolean functions. For an overview of the
 topic, we refer the reader to the survey \cite{odonnell_topics_2008}.  The set
 of binary additive noise channels, in turn, is widely used in the information
 theory literature. The following theorem shows that both characterizations are
 equivalent.
%
%
%

 \begin{thm}
   \label{thm:AB}
   $\calA_n=\mathcal{B}_n$.
 \end{thm}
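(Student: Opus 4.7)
The plan is to prove both inclusions separately. The direction $\mathcal{B}_n\subseteq\calA_n$ is a one-line calculation, while the reverse $\calA_n\subseteq\mathcal{B}_n$ uses Fourier (Walsh--Hadamard) expansion in the character basis $\{\chi_\calS\}_{\calS\subseteq[n]}$, which is an orthogonal basis for real functions on $\{-1,1\}^n$ under the uniform measure.

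For $\mathcal{B}_n\subseteq\calA_n$, suppose $Y^n=X^n\oplus Z^n$ with $Z^n\independent X^n$. The key observation is that parities factor multiplicatively under $\oplus$: $\chi_\calS(Y^n)=\prod_{i\in\calS}X_iZ_i=\chi_\calS(X^n)\chi_\calS(Z^n)$. Conditioning on $X^n$ and invoking independence yields $\EE{\chi_\calS(Y^n)\mid X^n}=\chi_\calS(X^n)\EE{\chi_\calS(Z^n)}$, so $c_\calS\defined\EE{\chi_\calS(Z^n)}\in[-1,1]$ verifies (\ref{eq:PAchannels}).

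For $\calA_n\subseteq\mathcal{B}_n$, I would Fourier-expand, for each fixed $x^n$, the map $y^n\mapsto p_{Y^n|X^n}(y^n\mid x^n)$ on $\{-1,1\}^n$. Its $\calS$-th coefficient equals $2^{-n}\EE{\chi_\calS(Y^n)\mid X^n=x^n}=2^{-n}c_\calS\chi_\calS(x^n)$ by the $\calA_n$ hypothesis. Using the multiplicative identity $\chi_\calS(x^n)\chi_\calS(y^n)=\chi_\calS(x^n\oplus y^n)$, the Fourier series collapses to $p_{Y^n|X^n}(y^n\mid x^n)=q(x^n\oplus y^n)$, where
\begin{equation*}
q(z^n)\defined 2^{-n}\sum_{\calS\subseteq[n]} c_\calS\,\chi_\calS(z^n).
\end{equation*}
Thus the conditional pmf depends on $(x^n,y^n)$ only through $z^n=x^n\oplus y^n$, so $Z^n\defined X^n\oplus Y^n$ is independent of $X^n$ with pmf $q$, which is the defining property of $\mathcal{B}_n$.

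The only step requiring care, and where I would expect the main (but mild) obstacle, is verifying that $q$ is a bona fide probability mass function. Non-negativity is automatic, since $q(z^n)=p_{Y^n|X^n}(x^n\oplus z^n\mid x^n)\geq 0$ for any $x^n$. Normalization reduces via the character orthogonality $\sum_{z^n}\chi_\calS(z^n)=2^n\mathbf{1}[\calS=\emptyset]$ to $\sum_{z^n}q(z^n)=c_\emptyset$, and $c_\emptyset=\EE{\chi_\emptyset(Y^n)\mid X^n}=1$ because $\chi_\emptyset\equiv 1$. Once the character identity $\chi_\calS(x^n\oplus y^n)=\chi_\calS(x^n)\chi_\calS(y^n)$ is in hand, Fourier inversion essentially writes the rest of the proof.
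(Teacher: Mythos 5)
Your proof is correct and follows essentially the same approach as the paper's: both inclusions rely on the multiplicativity of characters under $\oplus$ and, for $\calA_n\subseteq\mathcal{B}_n$, on Fourier expansion over $\{-1,1\}^n$. The paper reaches the conclusion slightly less directly (by expanding the indicator $\delta_{y^n}$, deriving the shift-invariance $p_{Y^n|X^n}(y^n\oplus z^n\mid x^n\oplus z^n)=p_{Y^n|X^n}(y^n\mid x^n)$, and then setting $p_{Z^n}(z^n)=p_{Y^n|X^n}(z^n\mid\ones^n)$), whereas you expand the conditional pmf itself and read off the explicit formula $q(z^n)=2^{-n}\sum_\calS c_\calS\chi_\calS(z^n)$ together with the sanity checks that $q$ is a valid pmf; the two constructions yield the same $Z^n$.
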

 \begin{proof}
     Let $Y^n = X^n \oplus Z^n$ for some
     $Z^n$ distributed over $\{-1,1\}^n$ and independent of $X^n$. Thus
        \begin{align*}
          \EE{\chi_\calS(Y^n)|X^n}&=\EE{\chi_\calS(Z^n\oplus X^n) \mid X^n}\\
          &=\EE{\chi_\calS(X^n)\chi_\calS(Z^n) \mid X^n}\\
          &=\chi_\calS(X^n) \EE{\chi_\calS(Z^n)},
        \end{align*}
     where the last equality follows from the assumption that $X^n\independent Z^n$. By
     letting $c_\calS =  \EE{\chi_\calS(Z^n)}$, it follows that $p_{Y^n|X^n}\in
     \calA_n$ and, consequently, $\mathcal{B}_n\subseteq \calA_n$.

     Now let $y_n$ be fixed and $\delta_{y^n}:\{-1,1\}^n\rightarrow \{0,1\}$ be given by 
        \begin{align*}
          \delta_{y^n}(x^n)=
            \begin{cases}
                1,& x^n=y^n,\\
                0,& \mbox{otherwise.}
            \end{cases}
        \end{align*}
        Since the function $\delta_{y^n}$ has Boolean inputs, it  can be expressed
        in terms of its Fourier expansion  \cite[Prop. 1.1]{odonnell_topics_2008} as
       \begin{equation*}
         \delta_{y^n}(x^n) = \sum_{\calS\subseteq [n]} \widehat{d}_\calS
         \chi_\calS(x^n).
       \end{equation*}
       Now let $p_{Y^n|X^n}\in \calA_n$. Observe that $p_{Y^n|X^n}(y^n|x^n) =
       \EE{\delta_{y^n}(Y^n)\mid X^n=x^n}$ and, for $z^n\in \{-1,1\}^n$,
     \begin{align*}
       p_{Y^n|X^n}(y^n\oplus z^n|x^n\oplus z^n) &= \EE{\delta_{y^n\oplus z^n}(Y^n)\mid
     X^n=x^n\oplus z^n}\\
     & =\EE{\delta_{y^n}(Y^n\oplus z^n)\mid X^n=x^n\oplus z^n}\\
     & = \EE{\sum_{\calS\subseteq [n]} \widehat{d}_\calS
         \chi_\calS(Y^n\oplus z^n)\mid X^n=x^n\oplus z^n}\\
     &=  \EE{\sum_{\calS\subseteq [n]} \widehat{d}_\calS
         \chi_\calS(Y^n)\chi_\calS(z^n)\mid X^n=x^n\oplus z^n}\\
         &\stackrel{(a)}{=}  \sum_{\calS\subseteq [n]} c_\calS \widehat{d}_\calS
         \chi_\calS(x^n\oplus z^n)\chi_\calS(z^n)\\
     &= \sum_{\calS\subseteq [n]} c_\calS \widehat{d}_\calS
         \chi_\calS(x^n)\\
         &\stackrel{(b)}{=} \EE{\sum_{\calS\subseteq [n]} \widehat{d}_\calS
         \chi_\calS(Y^n)|X^n=x^n}\\
         &= \EE{\delta_{y^n}(Y^n)\mid X^n=x^n}\\
         &=p_{Y^n|X^n}(y^n|x^n).
     \end{align*}
    Equalities $(a)$ and $(b)$ follow from the definition of $\calA_n$.
    By defining the distribution of $Z^n$ as
    $p_{Z^n}(z^n)\defined p_{Y^n|X^n}(z^n|\ones^n)$, where $\ones^n$ is the vector with
    all entries equal to 1, it follows that $Z^n=X^n\oplus Y^n$,
    $Z^n\independent X^n$ and
    $p_{Y^n|X^n}\subseteq \mathcal{B}_n$.

  \end{proof}

  The previous theorem suggests that there is a correspondence between the
  coefficients $c_\calS$ in \eqref{eq:PAchannels} and the distribution of the
  additive noise $Z^n$ in the definition of $\mathcal{B}_n$. The next result
  shows that this is indeed the case and, when $X^n$ is uniformly distributed,
  the coefficients $c_\calS^2$ correspond to the principal inertia components
  between $X^n$ and $Y^n$.

  \begin{thm}
    \label{thm:PIbinary}
    Let $p_{Y^n|X^n}\in \mathcal{B}_n$, and $X^n\sim p_{X^n}$. Then
    $\bP_{X^n,Y^n}=\bD_{X^n}\bbH_{2^n}\bLambda \bbH_{2^n}$, where $\bbH_l$ is
    the $l\times l$ normalized Hadamard matrix (i.e. $\bbH_l^2=\eye$).
    Furthermore, for $Z^n\sim p_{Z^n}$,  $\diag{\bLambda} =
    2^{n/2}\bbH_{2^n}\mathbf{p}_{Z^n}$, and the diagonal entries of $\bLambda$
    are equal to $c_\calS$ in \eqref{eq:PAchannels}. Finally, if $X$ is
    uniformly distributed, then $c_\calS^2$ are the principal inertia components
    of $X^n$ and $Y^n$.
  \end{thm}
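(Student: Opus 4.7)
The plan is to leverage Theorem \ref{thm:AB} to write $Y^n = X^n \oplus Z^n$ with $Z^n \independent X^n$, and then show that the conditional distribution matrix $\bP_{Y^n|X^n}$ is diagonalized by the Walsh--Hadamard transform. Concretely, the key intermediate identity I would establish is
\[
\bbH_{2^n}\bP_{Y^n|X^n}\bbH_{2^n} = \bLambda,
\]
where $\bLambda$ is diagonal with entries (indexed by $\calS\subseteq[n]$) equal to $c_\calS = \EE{\chi_\calS(Z^n)}$. Since $\bbH_{2^n}^2 = \eye$, this immediately rearranges to $\bP_{Y^n|X^n} = \bbH_{2^n}\bLambda\bbH_{2^n}$, and multiplying on the left by $\bD_{X^n}$ yields the claimed factorization of $\bP_{X^n,Y^n}$.

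To prove this diagonalization, I would compute the $(\calS,\calT)$-entry of $\bbH_{2^n}\bP_{Y^n|X^n}\bbH_{2^n}$ directly from $(\bbH_{2^n})_{x^n,\calS} = 2^{-n/2}\chi_\calS(x^n)$ and the additive-noise representation $p_{Y^n|X^n}(y^n|x^n) = p_{Z^n}(x^n\oplus y^n)$. Two elementary properties drive the calculation: the multiplicativity $\chi_\calT(x^n\oplus z^n) = \chi_\calT(x^n)\chi_\calT(z^n)$, which factors the resulting double sum over $(x^n,z^n)$, and the orthogonality $\sum_{x^n\in\{-1,1\}^n}\chi_\calS(x^n)\chi_\calT(x^n) = 2^n\delta_{\calS,\calT}$, which collapses the remaining sum to $\delta_{\calS,\calT}\EE{\chi_\calT(Z^n)}$. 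The identity $\diag{\bLambda} = 2^{n/2}\bbH_{2^n}\mathbf{p}_{Z^n}$ then falls out immediately from rewriting $c_\calS = \sum_{z^n}\chi_\calS(z^n)p_{Z^n}(z^n)$ as a Hadamard-transform coefficient.

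For the uniform-input case, the key observation is that convolving any probability vector on $\{-1,1\}^n$ against the uniform distribution returns the uniform distribution, so $\bD_{X^n} = \bD_{Y^n} = 2^{-n}\eye$. Substituting into the principal inertia decomposition from Section \ref{sec:PIC} yields
\[
\bD_{X^n}^{-1/2}\bP_{X^n,Y^n}\bD_{Y^n}^{-1/2} = \bbH_{2^n}\bLambda\bbH_{2^n},
\]
which is already a symmetric eigendecomposition since $\bbH_{2^n}$ is orthogonal. The singular values of the left-hand side are therefore $|c_\calS|$, so the principal inertia components are $c_\calS^2$ (with the largest, $c_\emptyset^2 = 1$, playing the role of the trivial top singular value).

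The computation itself is routine Fourier analysis on the Boolean cube, so I do not anticipate a serious analytical obstacle. The main points requiring care are bookkeeping: one must keep straight the two indexings of $\bbH_{2^n}$ (by $x^n\in\{-1,1\}^n$ on one side, by $\calS\subseteq[n]$ on the other), and remember that the $c_\calS$ may be negative, so the identification with the (nonnegative) principal inertia components happens at the level of squares rather than signs. Absorbing those signs into an orthogonal factor is what converts the eigendecomposition in the last step into a bona fide SVD.
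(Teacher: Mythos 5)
Your proposal is correct and follows essentially the same route as the paper: both diagonalize $\bP_{Y^n|X^n}$ via the Walsh--Hadamard matrix, identify the diagonal entries with the $c_\calS$, and obtain the principal inertia components by squaring in the uniform-input case. The only cosmetic difference is that the paper reads off the eigenvector relation directly from the defining property of $\calA_n$ (that $\chi_\calS(Y^n)$ has conditional expectation $c_\calS\chi_\calS(X^n)$) and then invokes Lemma~\ref{lem:conf} for the uniform case, whereas you re-derive the diagonalization by an explicit character-orthogonality computation starting from the additive-noise form and handle the uniform case by direct substitution; your explicit remark about absorbing the signs of $c_\calS$ into an orthogonal factor to get a genuine SVD is a useful clarification that the paper leaves implicit.
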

  \begin{proof}
    Let  $p_{Y^n|X^n}\in \calA_n$ be given. From Theorem \ref{thm:AB} and the
    definition of $\calA_n$, it follows that $\chi_\calS(Y^n)$ is a right
    eigenvector of $p_{Y^n|X^n}$ with corresponding eigenvalue $c_\calS$.  Since
    $\chi_\calS(Y^n)2^{-n/2}$ corresponds to a row of $\bbH_{2^n}$ for each
    $\calS$ (due to the Kronecker product construction of the Hadamard matrix) and $\bbH_{2^n}^2=\eye$, then
    $\bP_{X^n,Y^n}=\bD_{X^n}\bbH_{2^n}\bLambda \bbH_{2^n}$. Finally, note that
    $\mathbf{p}_Z^T=2^{-n/2}\ones^T\bLambda\bbH_{2^n}$. From Lemma
    \ref{lem:conf}, it follows that  $c_\calS^2$ are the principal inertia
    components of $X^n$ and $Y^n$ if $X^n$ is uniformly distributed.
  \end{proof}

  \begin{remark}
    Theorem \ref{thm:PIbinary} indicates that one possible method for estimating the
      distribution of the additive binary noise $Z^n$ is to estimate  its effect 
      on the  parity bits of $X^n$ and $Y^n$. In this case, we are
      estimating the coefficients $c_\calS$ of the Walsh-Hadamard transform of
      $p_{Z^n}$. This approach was  studied by Raginsky \etal
       in \cite{raginsky_recursive_2013}.
  \end{remark}

Theorem \ref{thm:PIbinary} illustrates the filtering role of the principal
inertia components, discussed in the beginning of this section. If $X^n$ is
uniform, and using the same notation as in \eqref{eq:OBF}, then the vector of
conditional probabilities $\fb$ is transformed into the vector of \textit{a
posteriori} probabilities $\gb$ by: (i) taking the Hadamard transform of $\fb$,
(ii) filtering the transformed vector according to the coefficients $c_\calS$, where
$\calS\in [n]$, and (iii) taking the inverse Hadamard transform. The same
rationale applies to the transformation of $\Px$ into $\Py$ in binary additive
channels.

\subsection{Quantifying the Information of a Boolean Function of the Input of a
Noisy Channel}

We now investigate the connection between the principal inertia components
and $f$-information in the context of one-bit functions of $X$.
Recall from the discussion in the beginning of this section and, in particular,
equation \eqref{eq:OBF},  that for a binary $B$ and $B\rightarrow X \rightarrow
Y$, the distribution of $B$ and $Y$ is entirely specified by the transformation
of $\fb$ into $\gb$, where $\fb$ and $\gb$ are vectors with entries equal to
$p_{B|X}(0|\cdot)$ and $p_{B|Y}(0|\cdot)$, respectively.

For $\EE{B}=1-a$, the $f$-information between $B$ and $Y$ is given
by\footnote{Note that here we assume that $\calY=[n]$, so there is no ambiguity
in indexing $p_{B|Y}(0|Y)$ by $\gb_Y$.}
    \begin{align*}
            I_f(B;Y) =\ExpVal{}{ a f\left( \frac{\gb_Y}{a}
            \right) +(1-a) f\left( \frac{1-\gb_Y}{1-a}
            \right) }.
    \end{align*}

For $0\leq r,s \leq 1$, we can expand $f\left( \frac{r}{s} \right)$ around
1 as

    \begin{equation*}
        f\left( \frac{r}{s} \right) = \sum_{k=1}^\infty  \frac{
        f^{(k)}(1)}{k!}\left( \frac{r-s}{r} \right)^k.
    \end{equation*}
Denoting 
    \begin{align*}
        c_k(\alpha) &\defined
        \frac{1}{a^{k-1}}+\frac{(-1)^k}{(1-a)^{k-1}},
    \end{align*}
the $f$-information can then be expressed as
    \begin{align}
        I_f(B;Y)& = \sum_{k=2}^\infty  \frac{
         f^{(k)}(1)c_k(a)}{k!}\ExpVal{}{(\gb_Y-a)^k}.
     \label{eq:momentDecomp}
    \end{align}

    Similarly to \cite[Chapter 4]{_information_2004}, for a fixed $\EE{B}=1-a$, maximizing
the principal inertia components between $X$ and $Y$ will always maximize the
first term in the expansion \eqref{eq:momentDecomp}. To see why this is the
case, observe that
    \begin{align}
        \ExpVal{}{(\gb_Y-a)^k} &= (\gb-a)^T\bD_Y(\gb-a) \nonumber\\
                               &= \gb^T\bD_Y\gb -a^2 \nonumber\\
                               &=\fb^T\bD_X^{1/2}\bU\bSigma^2\bU^T\bD_x^{1/2}\fb-a^2.\label{eq:var}
     \end{align}
For a fixed $a$ and any $\fb$ such that $\fb^T\ones=a$, \eqref{eq:var} is
non-decreasing in the diagonal  entries of $\bSigma^2$ which, in turn, are
exactly the principal inertia components of $X$ and $Y$. Equivalently,
\eqref{eq:var} is non-decreasing in the $\chi^2$-divergence between $\pxy$ and
$\px\py$.

However, we do note that increasing the principal inertia components \textbf{does not}
increase the $f$-information between $B$ and $Y$ in general. Indeed, for a fixed
$\bU$, $\bV$ and marginal distributions of $X$ and $Y$, increasing the
principal inertia components might not even lead to a valid probability
distribution matrix $\bP$. 

Nevertheless, if $\bP$ is conforming and $X$ and $Y$ are uniformly distributed
over $[q]$, as shown in Lemma \eqref{lem:qary}, by increasing the principal
inertia components we can define a new random variable $\tilde{Y}$ that results
from sending $X$ through a $(\epsilon,q)$-SC, where $\epsilon$ is given in
\eqref{eq:epsilon}. In this case, the $f$-information between $B$ and $Y$ has a
simple expression when $B$ is a function of $X$.

\begin{lem}
  \label{lem:qaryIf}
  Let $B\rightarrow X \rightarrow \Yt$, where $B=h(X)$ for some $h:[q]\rightarrow
  \{0,1\}$,  $\EE{B}=1-a$ where $aq$ is an integer, $X$ is uniformly distributed in $[q]$ and $\Yt$ is the
  result of passing $X$ through a $(\epsilon,q)$-SC with $\epsilon\leq (q-1)/q$. Then
  \begin{equation}
    \label{eq:genaralqary}
    I_f(B;\Yt)=a^2f\left( 1+\sigma_1 c
    \right)+2a(1-a)f\left(1-\sigma_1\right)+(1-a)^2f\left( 1+\sigma_1c^{-1} \right)
  \end{equation}
  where $\sigma_1=\rho_m(X;\Yt)=1-\epsilon q (q-1)^{-1}$ and $c\defined (1-a)a^{-1}$. In particular,
  for $f(x)=x\log x$, then $I_f(X;\Yt)=I(X;\Yt)$, and for $\sigma_1 = 1-2\delta$
    \begin{align}
        I(B;\Yt)&=h_b(a)-\alpha H_b\left( 2\delta(1-a) \right)-(1-a)H_b(2\delta
        a) \label{eq:qaryMI}\\
        &\leq 1-H_b(\delta) \label{eq:qaryMax},
    \end{align}
  where $H_b(x)\defined-x\log(x)-(1-x)\log(1-x)$ is the binary entropy function.
\end{lem}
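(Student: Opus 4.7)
The plan is to derive \eqref{eq:genaralqary} by direct computation of the posterior $p_{B|\Yt}$, specialize to mutual information via an entropy decomposition to obtain \eqref{eq:qaryMI}, and then establish the uniform bound \eqref{eq:qaryMax} by optimizing the resulting one-variable function of $a$.

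First, because $X$ is uniform on $[q]$ and the $(\epsilon,q)$-SC is doubly stochastic, Bayes' rule gives $p_{X|\Yt}(x|y)=p_{\Yt|X}(y|x)$, so
\begin{equation*}
p_{B|\Yt}(0|y)=\sum_{x\in h^{-1}(0)}p_{\Yt|X}(y|x)
\end{equation*}
depends on $y$ only through $h(y)$. Using $|h^{-1}(0)|=aq$ and $\sigma_1=1-\epsilon q/(q-1)$, arithmetic gives $p_{B|\Yt}(0|y)/a=1+\sigma_1 c$ when $h(y)=0$ and $1-\sigma_1$ when $h(y)=1$, with $c=(1-a)/a$; the analogous ratios for $p_{B|\Yt}(1|y)/(1-a)$ are $1-\sigma_1$ and $1+\sigma_1 c^{-1}$. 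Substituting into the $f$-information decomposition from \eqref{eq:OBF}, splitting the sum over $y$ according to $h(y)$, and using $p_{\Yt}(y)=1/q$ (with $aq$ outputs of the first type and $(1-a)q$ of the second), yields \eqref{eq:genaralqary} after collecting terms.

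For \eqref{eq:qaryMI} I would avoid pushing $f(x)=x\log x$ algebraically through \eqref{eq:genaralqary} and instead use $I(B;\Yt)=H(B)-H(B|\Yt)$. Substituting $\sigma_1=1-2\delta$ into the posterior ratios above gives $\Pr(B=0\mid h(\Yt)=0)=1-2\delta(1-a)$ and $\Pr(B=0\mid h(\Yt)=1)=2\delta a$; since $B$ is conditionally independent of $\Yt$ given $h(\Yt)$, the conditional entropy collapses to $H(B|\Yt)=aH_b(2\delta(1-a))+(1-a)H_b(2\delta a)$, from which \eqref{eq:qaryMI} follows immediately.

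The main obstacle is \eqref{eq:qaryMax}. Writing $\phi(a)$ for the right-hand side of \eqref{eq:qaryMI}, the symmetry $\phi(a)=\phi(1-a)$ together with $\phi(1/2)=1-H_b(\delta)$ reduces the claim to showing $a=1/2$ is the global maximizer of $\phi$ on $[0,1]$. Direct differentiation confirms $\phi'(1/2)=0$ by symmetry, so the cleanest route is to establish global concavity $\phi''(a)\le 0$; using $H_b'(x)=\ln((1-x)/x)$ and $H_b''(x)=-1/(x(1-x))$, this amounts to verifying that the dominant term $-1/(a(1-a))\le -4$ controls the positive logarithmic and reciprocal contributions from the derivatives of $H_b$ uniformly over $(a,\delta)\in(0,1)\times(0,1/2]$. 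This analytic inequality is the real work: it is tight at $a=1/2$ as $\delta\to 1/2$, so the dominance is delicate and requires separate treatment of the large-$\delta$ regime (where the reciprocal terms nearly cancel $-1/(a(1-a))$) and the small-$\delta$ regime (where the $4\delta\ln(1/\delta)$ contribution must be controlled). An alternative route, perhaps cleaner, is to interpret $\phi(a)$ as $I(B;h(\Yt))$ for a doubly-symmetric binary joint with crossover $4\delta a(1-a)\le\delta$ and adapt a Mrs.-Gerber-style argument to bound this mutual information by the capacity $1-H_b(\delta)$ of a $\mathrm{BSC}(\delta)$.
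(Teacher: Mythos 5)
Your derivation of \eqref{eq:genaralqary} is correct and amounts to the same computation as the paper's. You work with the posterior $p_{B|\Yt}$ via Bayes' rule; the paper reads the $f$-information directly off the $q$-ary symmetric joint $\tilde{\bP}=q^{-1}\sigma_1\eye+q^{-2}(1-\sigma_1)\ones\ones^T$ from Lemma~\ref{lem:qary}, but the posterior ratios and the final bookkeeping are identical. For \eqref{eq:qaryMI}, your route through $I(B;\Yt)=H(B)-H(B|\Yt)$ together with the observation that $B$ and $\Yt$ are conditionally independent given $h(\Yt)$ is a clean equivalent to the paper's one-line ``substitute $f(x)=x\log x$'' step; both give $H_b(a)-aH_b(2\delta(1-a))-(1-a)H_b(2\delta a)$.

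For \eqref{eq:qaryMax}, the argument you set up --- symmetry $\phi(a)=\phi(1-a)$, concavity of $\phi$ in $a$, hence maximizer at $a=1/2$ where $\phi(1/2)=1-H_b(\delta)$ --- is precisely the paper's argument. The only divergence is in how the concavity is treated: the paper asserts ``\eqref{eq:qaryMI} is concave in $a$'' in a single clause with no justification, whereas you correctly flag that $\phi''(a)\le 0$ is the real technical content and that it is delicate (indeed $\phi''(1/2)\to 0$ as $\delta\to 1/2$). So nothing in your proposal is wrong, but the concavity step is left unproved both in your write-up and in the paper itself; to be self-contained you would need to actually carry out the second-derivative estimate you sketch. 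One caution on your proposed alternative: the induced joint of $B$ and $h(\Yt)$ has agreement probability $1-4\delta a(1-a)$, but it is not a doubly-symmetric binary source unless $a=1/2$ (the two crossover probabilities are $2\delta(1-a)$ and $2\delta a$, which differ), so a Mrs.~Gerber-style shortcut along those lines would need to be reformulated.
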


\begin{proof}
  Since $B$ is a deterministic function of $X$ and $aq$ is an integer, $\fb$ is
  a vector with $aq$ entries equal to 1 and $(1-a)q$ entries equal to 0. It
  follows from \eqref{eq:qarydist} that
  \begin{align*}
    I_f(B;\Yt)=&\frac{1}{q}\sum_{i=1}^q af\left(
    \frac{(1-\sigma_1)a+\fb_i\sigma_1}{a} \right)+(1-a)f\left(
    \frac{1-(1-\sigma_1)a-\fb_i\sigma_i}{1-a} \right)\\
    =&a^2f\left( 1+\sigma_1 \frac{1-a}{a}
    \right)+2a(1-a)f\left(1-\sigma_1\right)+(1-a)^2f\left(
    1+\sigma_1\frac{a}{1-a} \right).
  \end{align*}
  Letting $f(x)=x\log x$, \eqref{eq:qaryMI} follows immediately. Since
  \eqref{eq:qaryMI} is concave in $a$ and symmetric around $a=1/2$, it is
  maximized at $a=1/2$, resulting in \eqref{eq:qaryMax}.
\end{proof}

\subsection{On the ``Most Informative Bit'' Conjecture}
We now return to channels with additive binary noise, analyzed
is Section \ref{sec:binAdd}. Let $X^n$ be a uniformly distributed binary string
of length $n$ ($\calX = \{-1,1\})$,
and $Y^n$  the result of passing $X^n$ through a memoryless binary
symmetric channel with crossover probability $\delta\leq 1/2$. Kumar and Courtade
conjectured  \cite{kumar_which_2013} that for all binary $B$ and
$B\rightarrow X^n \rightarrow Y^n$ we have
    \begin{equation}
      I(B;Y^n)\leq 1-H_b(\delta).~~\mbox{(conjecture)} \label{eq:conjI}
    \end{equation}
It is sufficient to consider $B$ a function of $X^n$, denoted by $B=h(X^n)$,
$h:\{-1,1\}^n\rightarrow \{0,1\}$, and we make this
assumption henceforth. 

From the discussion in Section \ref{sec:binAdd}, for the memoryless binary
symmetric channel $Y^n=X^n\oplus Z^n$, where
$Z^n$ is an i.i.d. string with $\Pr\{Z_i=1\}=1-\delta$, and any
$\calS\in [n]$,
\begin{align*}
  \EE{\chi_\calS(Y^n)|X^n}&=
  \chi_\calS(X^n)\left(\Pr\left\{\chi_\calS(Z^n)=1\right\}-\Pr\left\{\chi_\calS(Z^n)=-1\right\}\right)\\
                          &=
                          \chi_\calS(X^n)\left(2\Pr\left\{\chi_\calS(Z^n)=1\right\}-1\right)\\
                          &=\chi_\calS(X^n)(1-2\delta)^{|\calS|}.
\end{align*}
It follows directly that $c_\calS = (1-2\delta)^{|\calS|}$ for all
$\calS\subseteq [n]$. Consequently, from Theorem \ref{thm:PIbinary}, the
principal inertia components of $X^n$ and $Y^n$ are of the form
$(1-2\delta)^{2|\calS|}$ for some $\calS\subseteq [n]$. Observe that the
principal inertia components act as a low pass filter on the vector of
conditional probabilities $\fb$ given in \eqref{eq:OBF}.

Can the noise distribution be modified so that the principal inertia components
act as an all-pass filter? More specifically, what happens when 
$\Yt^n=X^n\oplus W^n$, where $W^n$ is such that the principal inertia
components between $X^n$ and $\Yt^n$ satisfy $\sigma_i=1-2\delta$?
Then, from Lemma \ref{lem:qary}, $\Yt^n$ is the result of sending $X^n$ through
a $(\epsilon,2^n)$-SC with $\epsilon=2\delta(1-2^{-n})$. Therefore, from \eqref{eq:qaryMax}, 
    \begin{equation*}
        I(B;\Yt^n)\leq 1-H_b(\delta).
    \end{equation*}

    For any function $h:\{-1,1\}^n\rightarrow \{0,1\}$ such that $B=h(X^n)$, from standard results in Fourier
analysis of Boolean functions  \cite[Prop.  1.1]{odonnell_topics_2008}, $h(X^n)$
can be expanded as \[h(X^n)=\sum_{\calS\subseteq [n]}\hat{h}_\calS
\chi_\calS(X^n) .\] The value of $B$ is uniquely determined by the
action of $h$ on $\chi_\calS(X^n)$. Consequently, for a fixed function $h$, one
could expect that $\Yt^n$ should be more informative about $B$ than $Y^n$, since
the parity bits $\chi_\calS(X^n)$ are more reliably estimated from $\Yt^n$ than
from $Y^n$. Indeed, the memoryless
binary symmetric channel attenuates $\chi_\calS(X^n)$ exponentially  in
$|\calS|$, acting (as argued previously) as a low-pass filter. In addition, if
one could prove that for any fixed $h$ the inequality  $I(B;Y^n)\leq I(B;\Yt^n)$
holds, then \eqref{eq:conjI} would be proven true. This motivates the following
conjecture.


\begin{conj}
  \label{conj}
    For all $h:\{-1,1\}^n\rightarrow \{0,1\}$ and $B=h(X^n)$ 
    \begin{align*}
        I(B;Y^n)\leq I(B;\Yt^n).
    \end{align*}
\end{conj}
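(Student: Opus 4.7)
The plan is to reduce the claim to an explicit comparison of conditional entropies and then attempt a moment-based analysis of the two resulting conditional expectation operators. By Lemma \ref{lem:qaryIf}, the right-hand side has the closed form $I(B;\Yt^n)=H_b(p)-pH_b(2\delta(1-p))-(1-p)H_b(2\delta p)$ with $p\defined \EE{B}$. Since $I(B;Y^n)=H_b(p)-H(B|Y^n)$, the conjecture is equivalent to showing
\begin{equation*}
H(B|Y^n)\;\ge\; pH_b\bigl(2\delta(1-p)\bigr)+(1-p)H_b\bigl(2\delta p\bigr).
\end{equation*}

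Expanding $h(x^n)=\sum_{\calS\subseteq[n]}\hat{h}_\calS\chi_\calS(x^n)$ and using the identity $\EE{\chi_\calS(X^n)\mid Y^n}=(1-2\delta)^{|\calS|}\chi_\calS(Y^n)$, which is valid because $X^n$ is uniform and the channel is a tensor product of BSCs, I obtain $g(Y^n)\defined \EE{B\mid Y^n}=\sum_{\calS}\hat{h}_\calS(1-2\delta)^{|\calS|}\chi_\calS(Y^n)$, and, since $B$ is binary, $H(B|Y^n)=\EE{H_b(g(Y^n))}$. An analogous computation for $\Yt^n$, combined with Theorem \ref{thm:PIbinary}, yields $\EE{B\mid \Yt^n}=(1-2\delta)h(\Yt^n)+2\delta p$, which takes only the two values $1-2\delta(1-p)$ and $2\delta p$ with probabilities $p$ and $1-p$, thereby recovering the closed form above. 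Hence the conjecture reduces to the functional inequality $\EE{H_b(g(Y^n))}\ge pH_b(2\delta(1-p))+(1-p)H_b(2\delta p)$.

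As a second-order sanity check, Parseval gives $\mathrm{Var}(g(Y^n))=\sum_{\calS\neq \emptyset}\hat{h}_\calS^{\,2}(1-2\delta)^{2|\calS|}\le (1-2\delta)^2\sum_{\calS\neq \emptyset}\hat{h}_\calS^{\,2}=(1-2\delta)^2 p(1-p)=\mathrm{Var}(\EE{B\mid \Yt^n})$, matching the filtering picture of Section \ref{sec:functions}: equalizing all PICs to the largest one enlarges the spread of $\EE{B\mid\cdot}$ around its common mean $p$. By strict concavity of $H_b$ at $p$, a Taylor expansion around $p$ turns this variance dominance into the desired inequality at second order, mirroring the $\chi^2$-divergence term in \eqref{eq:momentDecomp}.

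The main obstacle is lifting this second-order comparison to the full binary entropy. Variance dominance does not in general imply $\EE{H_b(\cdot)}$ dominance for a concave function, since $g(Y^n)$ and $\EE{B\mid \Yt^n}$ have quite different shapes (the latter is supported on two points, the former is spread over $[0,1]$). I would pursue two complementary routes. First, interpolate between the channels by a one-parameter family (for instance, a convex combination of BSC$(\delta)^{\otimes n}$ and the $(\epsilon,2^n)$-SC) and prove monotonicity of $I(B;\cdot)$ along the path by differentiating under the expectation; this would reduce the claim to a local spectral inequality expressible directly through $\hat{h}_\calS$ and the two sets of PICs. Second, exploit the two-point support of $\EE{B\mid \Yt^n}$ together with hypercontractivity of the operator $T_{1-2\delta}$ \cite{odonnell_topics_2008} to produce an integrated bound on $\EE{H_b(g(Y^n))}$ from below. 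The key conceptual difficulty is that the channel $X^n\to \Yt^n$ is not a tensor product, which rules out straightforward tensorization of the kind that drives standard Bonami--Beckner style arguments, so any successful attack will likely have to compare the non-product operator $\tilde{T}$ against $T_{1-2\delta}^{\otimes n}$ directly on the two-level functions of the form $h:\{-1,1\}^n\to\{0,1\}$.
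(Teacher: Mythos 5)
The statement you were asked to prove is a \emph{conjecture} in the paper; the authors explicitly do not prove it. Immediately after stating Conjecture~\ref{conj} they note that the inequality fails when $B$ is not a deterministic function of $X^n$, and Section~\ref{sec:estimators} then supplies only \emph{evidence}: Theorem~\ref{thm:Estimators} and Corollary~\ref{cor:MIEstimators} bound $I(B;\whB)$ along the longer chain $B\to X\to Y\to\whB$, under the extra restrictions $\EE{B}=\EE{\whB}$ and $\Pr\{B=\whB=0\}\ge a^2$, arriving at a right-hand side that matches Lemma~\ref{lem:qaryIf}. That is a statement about $I(B;\whB)$, not about $I(B;Y^n)$, so it does not close the conjecture. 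There is therefore no proof in the paper to compare your attempt against.

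Your reductions are correct and honest about the gap. The identity $\EE{\chi_\calS(X^n)\mid Y^n}=(1-2\delta)^{|\calS|}\chi_\calS(Y^n)$ for a memoryless BSC with uniform input, the Fourier form of $g(Y^n)=\EE{B\mid Y^n}$, the two-point formula $\EE{B\mid\Yt^n}=(1-2\delta)h(\Yt^n)+2\delta p$ obtained by setting all PICs to $(1-2\delta)^2$, and the Parseval bound $\mathrm{Var}(g(Y^n))\le(1-2\delta)^2p(1-p)=\mathrm{Var}(\EE{B\mid\Yt^n})$ all check out, and the last of these is exactly the second-order term the paper isolates in \eqref{eq:momentDecomp}. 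The step you acknowledge you cannot make --- upgrading variance dominance under a concave $H_b$ to $\EE{H_b(g(Y^n))}\ge pH_b(2\delta(1-p))+(1-p)H_b(2\delta p)$, given that $g(Y^n)$ is spread over $[0,1]$ while $\EE{B\mid\Yt^n}$ is supported on two points --- is the genuine open problem, and it is where the paper also stops. Neither of your two proposed routes (interpolation along a channel path, or a direct hypercontractive comparison of $\tilde T$ with $T_{1-2\delta}^{\otimes n}$) is carried out in the paper, and, as you correctly observe, the non-product structure of the $(\epsilon,2^n)$-SC is a real obstruction to tensorization. Your write-up should be presented as a reduction plus a plan of attack, not as a proof; on that understanding, it is sound, and it makes explicit a Fourier-analytic form of the comparison that the paper only gestures at.
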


We note that Conjecture \ref{conj} is not true in general if $B$ is not a
deterministic function of $X^n$. In the next section, we provide further
evidence for this conjecture by investigating information metrics between $B$
and an estimate $\whB$ derived from $Y^n$.


\section{One-bit Estimators}
\label{sec:estimators}

Let $B\rightarrow X \rightarrow Y \rightarrow \widehat{B}$, where $B$ and $\whB$
are binary random variables with $\EE{B}=1-a$ and $\mathbb{E}[\whB]=1-b$. 
 We denote by $\xb\in \Reals^m$ and $\yb\in \Reals^n$ the
column vectors with entries $\xb_i = p_{B|X}(0|i)$ and $\yb_j =
p_{\whB|Y}(0|j)$. The joint distribution matrix of $B$ and $\whB$ is given by
\begin{equation}
  \label{eq:Pbbh}
  \mathbf{P}_{B,\whB}=\left(
\begin{array}{cc}
 z  & a-z  \\
 b-z  & 1-a-b+z \\
\end{array}
\right),
\end{equation}
where $z=\xb^T\bP\yb=\Pr\{B=\whB=0 \}$. For fixed values of $a$ and $b$, the joint distribution
of $B$ and $\whB$ only depends on $z$.

Let $f:\calP_{2\times 2}\rightarrow \Reals$, and, with a slight abuse of
notation, we also denote $f$ as a function of the entries of the $2\times 2$ matrix as
$f(a,b,z)$. If $f$ is convex in $z$ for a fixed $a$ and $b$, then $f$ is
maximized at one of the extreme values of $z$. Examples of such functions $f$
include mutual information and expected error probability.  Therefore,
characterizing the maximum and minimum values of $z$ is equivalent to
characterizing the maximum value of $f$ over all possible mappings $X\rightarrow
B$ and $Y\rightarrow \whB$. This leads to the following definition. 
\begin{defn}
    For a fixed $\bP$, the minimum and maximum
    values of $z$ over all possible mappings $X\rightarrow B$ and $Y\rightarrow
    \whB$ where  $\EE{B}=1-a$ and $\mathbb{E}[\whB]=1-b$ is defined as
    \begin{equation*}
      z^*_l(a,b,\bP) \defined \min_{\substack{\xb \in \calC^m(a,\bP^T)\\\yb \in
      \calC^n(b,\bP)}}
      \xb^T\bP\yb ~\mbox{ and }~
       z^*_u(a,b,\bP) \defined \max_{\substack{\xb \in \calC^m(a,\bP^T)\\\yb \in
       \calC^n(b,\bP)}}
       \xb^T\bP\yb,
    \end{equation*}
    respectively, and $\calC^{n}(a,\bP)$ is defined in \eqref{eq:Cdef}.
    
\end{defn}


The next lemma provides a simple upper-bound for $z^*_u(a,b,\bP)$ in terms of
the largest principal inertia components or, equivalently, the maximal
correlation between $X$ and $Y$.

\begin{lem}
  \label{lem:zupper}
  $z_u^*(a,b,\bP)\leq ab+\rho_m(X;Y)\sqrt{a(1-a)b(1-b)}$.
\end{lem}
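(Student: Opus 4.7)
The plan is to reformulate $z = \xb^T \bP \yb$ as an expectation, centre the underlying random variables, apply the variational definition of maximal correlation, and then exploit the $[0,1]$ constraint on $\xb$ and $\yb$ to bound the resulting variances.

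First I would observe that if we view $\xb$ as a function of $X$ and $\yb$ as a function of $Y$, then $z = \xb^T \bP \yb = \EE{\xb(X)\yb(Y)}$. Moreover, since $\bP \ones = \Px$ and $\bP^T \ones = \Py$, the constraint $\xb \in \calC^m(a,\bP^T)$ together with nonnegativity of the entries of $\xb$ forces $\normX{\bP^T \xb}{1} = \ones^T \bP^T \xb = \Px^T \xb = a$, i.e.\ $\EE{\xb(X)} = a$. An analogous calculation gives $\EE{\yb(Y)} = b$.

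Next, I would centre the variables by setting $\xb'(X) \defined \xb(X) - a$ and $\yb'(Y) \defined \yb(Y) - b$, so that
\begin{equation*}
  z = \EE{\xb(X)\yb(Y)} = ab + \EE{\xb'(X)\yb'(Y)}.
\end{equation*}
Since $\xb'$ and $\yb'$ have zero mean, the variational characterisation of $\rho_m(X;Y)$ recalled in Section~\ref{sec:PIC} gives
\begin{equation*}
  \EE{\xb'(X)\yb'(Y)} \leq \rho_m(X;Y)\sqrt{\Var{}{\xb(X)}\,\Var{}{\yb(Y)}}.
\end{equation*}

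Finally, I would use the fact that $\xb(X) \in [0,1]$ to write $\xb(X)^2 \leq \xb(X)$, hence $\EE{\xb(X)^2} \leq a$ and $\Var{}{\xb(X)} \leq a(1-a)$; symmetrically $\Var{}{\yb(Y)} \leq b(1-b)$. Chaining the three bounds yields the claim. None of the steps looks like a genuine obstacle; the only subtlety worth double-checking is that the $\ell_1$-constraint in the definition of $\calC^{\,\cdot}(\,\cdot\,,\,\cdot\,)$ collapses to the moment condition $\EE{\xb(X)} = a$ thanks to the entrywise nonnegativity of $\bP$ and of $\xb,\yb$, which is what allows the argument to proceed cleanly.
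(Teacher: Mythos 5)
Your proof is correct, and it is essentially the same argument as the paper's, differing only in presentation. The paper unpacks the principal inertia decomposition \eqref{eq:PIDecomp}, writes $z = ab + \hat{\xb}^T\bSigma^{-}\hat{\yb}$ with $\hat{\xb} = \bU^T\bD_X^{1/2}\xb$, $\hat{\yb} = \bV^T\bD_Y^{1/2}\yb$, bounds $\sigma_{i-1}\leq\sigma_1$ termwise, and applies Cauchy--Schwarz to the tail $\sum_{i\geq 2}\hat{\xb}_i\hat{\yb}_i$; you instead decompose $z = ab + \mathrm{Cov}(\xb(X),\yb(Y))$ and invoke the variational supremum characterization of $\rho_m$ as a black box. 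These are the same inequality --- the SVD computation is precisely how one proves the variational bound in coordinates, with $\normEuc{\hat{\xb}}^2 = \EE{\xb(X)^2}$ and $\hat{\xb}_1 = \EE{\xb(X)}$ --- so the choice is purely stylistic: yours is shorter and hides the linear algebra, the paper's keeps the principal inertia components in view, which matches the surrounding exposition. One small point your version handles more cleanly: the final step is genuinely an inequality (you correctly get $\Var{}{\xb(X)} \leq a(1-a)$ from $\xb_i^2 \leq \xb_i$), whereas the paper's line ``$\normEuc{\hat{\xb}} = a$'' reads as an equality; it should be $\normEuc{\hat{\xb}}^2 \leq a$, with equality only when $\xb$ is $\{0,1\}$-valued.
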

\begin{remark}
  An analogous result was derived by Witsenhausen  \cite[Thm. 2]{witsenhausen_sequences_1975} for bounding the
    probability of agreement of a common bit derived from two correlated
    sources.
\end{remark}
\begin{proof}
    Let $\xb \in \calC^m(a,\bP^T)$ and $\yb \in \calC^n(b,\bP)$. Then, for $\bP$
     decomposed as in \eqref{eq:PIDecomp} and 
    $\bSigma^{-}=\diag{0,\sigma_1,\dots,\sigma_d}$,
    \begin{align} 
      \xb^T\bP\yb & = ab + \xb^T\bD_X^{1/2}\bU\bSigma^{-}\bV^T\bD_Y^{1/2}\yb
      \nonumber\\
      & = ab+\hat{\xb}^T\bSigma^{-}\hat{\yb}, \label{eq:zExpand}            
    \end{align}
    where $\hat{\xb}\defined \bU^T\bD_X^{1/2}\xb$ and $\hat{\yb}\defined
    \bV^T\bD_Y^{1/2}\yb$. Since  $\hat{\xb}_1=\normEuc{\hat{\xb}}=a$ and
    $\hat{\yb}_1=\normEuc{\hat{\yb}}=b$, then
    \begin{align*}
      \hat{\xb}^T\bSigma^{-}\hat{\yb} &= \sum_{i=2}^{d+1}\sigma_{i-1}
      \hat{\xb}_i\hat{\yb}_i\\
      &\leq
      \sigma_1\sqrt{\left(\normEuc{\hat{\xb}}^2-\hat{\xb}_1^2
      \right)\left(\normEuc{\hat{\yb}}^2-\hat{\yb}_1^2\right)}\\
      &= \sigma_1\sqrt{(a-a^2)(b-b^2)}.
    \end{align*}
    The result  follows by noting that $\sigma_1=\rho_m(X;Y)$.
\end{proof}

We will focus in the rest of this
section on functions and corresponding estimators that are (i) unbiased ($a=b$) and
(ii) satisfy $z=\Pr\{\hat{B}=B=0\}\geq a^2$. The set of all such mappings is
given by
\begin{equation*}  
    \mathcal{ H }(a,\bP)\defined \left\{ (\bx,\by)\mid \bx\in
    \calC^m(a,\bP^T),\by\in \calC^n(a,\bP),\bx^T\bP \by\geq a^2 \right\}.
\end{equation*}

The next results provide upper and lower bounds for $z$ for the mappings in
$\calH(a,\bP)$.

\begin{lem}
  \label{lem:zbounds}
    Let $0\leq a\leq 1/2$ and $\bP$ be fixed. For any $(\bx,\by)\in  \calH(a,\bP)$
        \begin{equation}
          \label{eq:zbounds}
            a^2\leq z \leq a^2+\rho_m(X;Y)a(1-a),
        \end{equation}  
        where  $z=\bx^T\bP\by$.
\end{lem}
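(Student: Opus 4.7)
The plan is to observe that both bounds follow almost immediately from results already established in the paper.

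For the lower bound, recall that $\calH(a,\bP)$ is \emph{defined} to contain only those pairs $(\bx,\by)$ for which $\bx^T\bP\by\geq a^2$. Hence if $(\bx,\by)\in\calH(a,\bP)$, then $z=\bx^T\bP\by\geq a^2$ by definition, with no additional work required.

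For the upper bound, I would invoke Lemma \ref{lem:zupper} directly with $b=a$. That lemma gives $z^*_u(a,b,\bP)\leq ab+\rho_m(X;Y)\sqrt{a(1-a)b(1-b)}$ for any $(\bx,\by)\in\calC^m(a,\bP^T)\times\calC^n(b,\bP)$, and since $\calH(a,\bP)\subseteq \calC^m(a,\bP^T)\times\calC^n(a,\bP)$, any element of $\calH(a,\bP)$ inherits that estimate. Substituting $b=a$ and using that $0\leq a\leq 1/2$ implies $a(1-a)\geq 0$, the square root simplifies as $\sqrt{a(1-a)\cdot a(1-a)}=a(1-a)$, yielding exactly $z\leq a^2+\rho_m(X;Y)a(1-a)$.

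There is essentially no obstacle: the lower bound is a tautology from the definition of $\calH(a,\bP)$, and the upper bound is the specialization of Lemma \ref{lem:zupper} to equal marginals. The role of the hypothesis $a\leq 1/2$ is merely to fix a canonical range (the binary labels of $B$ and $\whB$ may be swapped without loss of generality); it is not actually needed for the inequalities themselves, which hold for all $a\in[0,1]$. I would therefore keep the proof to two short sentences, one for each inequality, and flag explicitly that the upper estimate here is the $b=a$ case of the maximal-correlation bound proved above.
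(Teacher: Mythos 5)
Your proof is correct and follows exactly the same route as the paper's own: the lower bound is the defining inequality of $\calH(a,\bP)$, and the upper bound is Lemma~\ref{lem:zupper} specialized to $b=a$. Your side observation that $a\leq 1/2$ is not actually used in either inequality is accurate as well.
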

\begin{proof}
    The lower bound for $z$ follows directly from the definition of
    $\calH(a,\bP)$, and the upper bound follows from Lemma \ref{lem:zupper}. 
\end{proof}

The previous lemma allows us to provide an upper bound over the mappings in
$\calH(a,\bP)$ for the $f$-information between $B$ and $\whB$ when $I_f$ is
non-negative. 

\begin{thm}
  \label{thm:Estimators}
  For any non-negative $I_f$  and fixed $a$ and $\bP$,
  \begin{equation} 
    \label{eq:unbiasedBound}
    \sup_{ (\bx,\by) \in \calH(a,\bP)} I_f(B;\hat{B})\leq  a^2f\left( 1+\sigma_1 c
    \right)+2a(1-a)f\left(1-\sigma_1\right)+(1-a)^2f\left( 1+\sigma_1c^{-1} \right)
  \end{equation}
  where here $\sigma_1=\rho_m(X;\Yt)$ and $c\defined (1-a)a^{-1}$. In particular, for
  $a=1/2$,
    \begin{equation} 
    \sup_{ (\bx,\by) \in \calH(1/2,\bP)} I_f(B;\hat{B})\leq
  \frac{1}{2}\left( f(1-\sigma_1)+f(1+\sigma_1) \right). 
  \end{equation}
\end{thm}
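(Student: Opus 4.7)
The plan is to parametrize $I_f(B;\widehat{B})$ by the single scalar $z=\mathrm{Pr}\{B=\widehat{B}=0\}$, exploit convexity of $I_f$ in this parameter, and then invoke Lemma \ref{lem:zbounds} to restrict $z$ to a closed interval whose endpoints make the bound explicit.

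First I would write out $I_f(B;\widehat{B}) = D_f(p_{B,\widehat{B}} \| p_B p_{\widehat{B}})$ directly from the joint distribution matrix \eqref{eq:Pbbh} with $b=a$. Since both marginals are fixed to be Bernoulli$(1-a)$ once $\xb,\yb \in \calH(a,\bP)$ is chosen, the joint distribution depends only on $z$, and a direct expansion yields
\begin{align*}
  g(z) \defined I_f(B;\widehat{B}) = a^2 f\!\left(\tfrac{z}{a^2}\right) + 2a(1-a) f\!\left(\tfrac{a-z}{a(1-a)}\right) + (1-a)^2 f\!\left(\tfrac{1-2a+z}{(1-a)^2}\right).
\end{align*}

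Next I would observe that $g$ is convex on its domain. This is immediate because, for fixed marginals, the joint distribution $p_{B,\widehat{B}}$ is an affine function of $z$, and $f$-divergences are convex in the joint distribution whenever $f$ is convex (which is assumed throughout the paper). A convex function on a closed interval attains its maximum at an endpoint, so the maximum of $g$ over any interval $[z_{\min},z_{\max}]$ is at most $\max\{g(z_{\min}),g(z_{\max})\}$. Lemma \ref{lem:zbounds} then provides exactly the interval we need: for every $(\xb,\yb)\in \calH(a,\bP)$ one has $a^2 \leq z \leq a^2 + \sigma_1 a(1-a)$, where $\sigma_1 = \rho_m(X;Y)$.

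It remains to evaluate $g$ at the two endpoints. At $z = a^2$, the joint distribution equals the product of the marginals, so $g(a^2)=0$ (alternatively, each argument of $f$ equals $1$ and $f(1)=0$). At $z = a^2 + \sigma_1 a(1-a)$, a short algebraic simplification gives $z/a^2 = 1+\sigma_1 c$, $(a-z)/(a(1-a)) = 1-\sigma_1$, and $(1-2a+z)/(1-a)^2 = 1 + \sigma_1 c^{-1}$, which is precisely the right-hand side of \eqref{eq:unbiasedBound}. Since $I_f$ is assumed non-negative, the endpoint value at $z=a^2$ is dominated by the value at the upper endpoint, and the claimed bound follows. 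The specialization $a=1/2$ gives $c=1$, collapsing the first and third terms into $\tfrac{1}{2}(f(1-\sigma_1)+f(1+\sigma_1))$.

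The only potentially non-routine step is convexity of $g$ in $z$; but since $f$ is convex and the map $z \mapsto p_{B,\widehat{B}}$ (with marginals fixed) is affine, this is immediate from the definition of $f$-divergence, and no appeal to joint convexity in both arguments is required. The rest is endpoint evaluation and application of Lemma \ref{lem:zbounds}.
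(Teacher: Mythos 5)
Your proof is correct and follows essentially the same approach as the paper: reduce $I_f(B;\widehat{B})$ to a convex function of the single parameter $z$, restrict $z$ via Lemma~\ref{lem:zbounds}, and evaluate at the upper endpoint. The only cosmetic difference is that the paper argues the function is non-decreasing on the interval (convexity plus a zero minimum at the left endpoint), whereas you directly compare the two endpoint values of the convex function; both routes and the endpoint algebra are valid.
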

\begin{proof}
     Using the matrix form of the joint distribution between $B$ and $\whB$
     given in \eqref{eq:Pbbh}, for $\EE{B}=\EE{\whB}=1-a$, the $f$ information is given by
    \begin{align}
      \label{eq:Ifproof}
      I_f(B;\hat{B}) = a^2f\left( \frac{z}{a^2} \right)+ 2a(1-a)f\left(
      \frac{a-z}{a(1-a)} \right)
                        + (1-a)^2f\left( \frac{1-2a+z}{(1-a)^2} \right).
    \end{align}
    Consequently, \eqref{eq:Ifproof} is convex in $z$. For $(\bx,\by)
    \in \calH(a,\bP)$, it follows from Lemma \ref{lem:zbounds} that $z$ is restricted to the
    interval in \eqref{eq:zbounds}. Since  $I_f(B;\hat{B})$ is non-negative by
    assumption, $I_f(B;\hat{B})=0$ for $z=a^2$ and
    \eqref{eq:Ifproof} is convex in $z$, then $I_f(B;\hat{B})$  is non-decreasing in $z$ for $z$ in
    \eqref{eq:zbounds}. Substituting $z=a^2+\rho_m(X;Y)a(1-a)$ in
    \eqref{eq:Ifproof}, inequality \eqref{eq:unbiasedBound} follows.    
  \end{proof}

\begin{remark}
  Note that the right-hand side of \eqref{eq:unbiasedBound} matches  the
  right-hand side of \eqref{eq:genaralqary}, and provides further evidence for
  Conjecture \ref{conj}. This result indicates that, for
  conforming probability distributions, the information between a binary
  function and its corresponding unbiased estimate is maximized when all the principal
  inertia components have the same value.
\end{remark}

Following the same approach from Lemma \ref{lem:qaryIf}, we find the next 
bound for the mutual information between $B$ and $\whB$. 
\begin{cor}
  \label{cor:MIEstimators}
  For $a$ fixed and $\rho_m(X;Y)=1-2\delta$ 
  \begin{align*}
    \sup_{ (\bx,\by) \in \calH(a,\bP)} I(B;\hat{B})\leq
    1-H_b(\delta). 
  \end{align*}
\end{cor}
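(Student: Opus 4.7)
The plan is to obtain Corollary \ref{cor:MIEstimators} as a direct specialization of Theorem \ref{thm:Estimators} followed by the one-variable maximization already carried out in Lemma \ref{lem:qaryIf}. Concretely, I would set $f(x) = x\log x$, for which the associated $f$-divergence is the Kullback--Leibler divergence and $I_f(B;\whB)$ equals the mutual information $I(B;\whB)$. Since KL divergence is non-negative, the hypothesis of Theorem \ref{thm:Estimators} is satisfied and the bound in (\ref{eq:unbiasedBound}) applies.

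Substituting $f(x)=x\log x$ and $\sigma_1 = 1-2\delta$ into the right-hand side of (\ref{eq:unbiasedBound}), I obtain exactly the three-term expression
\[
a^2 f\!\left(1+\sigma_1 c\right) + 2a(1-a) f\!\left(1-\sigma_1\right) + (1-a)^2 f\!\left(1+\sigma_1 c^{-1}\right)
\]
with $c = (1-a)/a$. This is the very quantity evaluated in the proof of Lemma \ref{lem:qaryIf}: by routine algebra (factoring the $\log$ terms, collecting the pieces that reassemble the binary entropy function, and noting that the linear-in-$\log$ cross terms add up cleanly), it simplifies to
\[
h_b(a) - a\,H_b\!\left(2\delta(1-a)\right) - (1-a)\,H_b\!\left(2\delta a\right),
\]
which is precisely the bound displayed in (\ref{eq:qaryMI}). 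So at this point the supremum in the corollary is bounded by a purely $a$-dependent scalar function of $\delta$.

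The last step is to bound this $a$-dependent expression uniformly in $a$ by the constant $1-H_b(\delta)$. This is exactly the concluding maximization step in the proof of Lemma \ref{lem:qaryIf}: the expression is concave in $a$ and symmetric about $a=1/2$, hence attains its maximum over $a \in [0,1]$ at $a = 1/2$, at which point the two symmetric entropy terms collapse to $H_b(\delta)$ and the expression evaluates to $1-H_b(\delta)$. Chaining this with the inequality from Theorem \ref{thm:Estimators} gives the corollary for every fixed $a$.

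I do not anticipate any genuine obstacle here: the result is essentially a composition of Theorem \ref{thm:Estimators} with the analytic identities already contained in Lemma \ref{lem:qaryIf}. The only care needed is bookkeeping in the second step to confirm that the bound from Theorem \ref{thm:Estimators}, written in terms of the generic convex $f$, matches the closed form expression from Lemma \ref{lem:qaryIf} after specializing to $f(x) = x \log x$.
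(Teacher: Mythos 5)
Your proposal is correct and matches the paper's approach: the paper proves the corollary by exactly the route you describe, namely specializing Theorem~\ref{thm:Estimators} to $f(x)=x\log x$ so that its right-hand side coincides with \eqref{eq:genaralqary}, then reusing the algebraic simplification and the concavity-plus-symmetry maximization at $a=1/2$ from the proof of Lemma~\ref{lem:qaryIf} to obtain $1-H_b(\delta)$. Your observation that the algebraic steps from Lemma~\ref{lem:qaryIf} transfer without the integrality hypothesis on $aq$ is a correct and worthwhile point of care.
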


We now provide a few application examples for the results derived in this section.

\subsection{Lower Bounding the Estimation Error Probability}

For $z$ given in \eqref{eq:Pbbh}, the average estimation error probability is
given by $\Pr\{B\neq
\whB\}=a+b-2z$, which is a convex (linear) function of $z$. If  $a$ and $b$ are fixed,
then the error probability  is minimized when $z$ is
maximized. Therefore
\begin{equation*}
   \Pr\{B\neq \whB\}\geq a+b-2z_u^*(a,b). 
\end{equation*}
Using the bound from Lemma \ref{lem:zupper}, it follows that
\begin{equation}
  \label{eq:errorProb}
  \Pr\{B\neq \whB\}\geq a+b-2ab-2\rho_m(X;Y)\sqrt{a(1-a)b(1-b)}. 
\end{equation}
The bound \eqref{eq:errorProb} is exactly the bound derived by Witsenhausen in
\cite[Thm 2.]{witsenhausen_sequences_1975}. Furthermore, minimizing the right-hand side of  \eqref{eq:errorProb}
over $0\leq b \leq 1/2$, we arrive at
\begin{equation}
  \label{eq:Fano}
      \Pr\{B\neq \whB\}\geq\frac{1}{2}\left(
  1-\sqrt{1-4a(1-a)(1-\rho_m(X;Y)^2)} \right), 
  \end{equation}
  which is a particular form of the bound derived by Calmon \etal \cite[Thm. 3]{calmon_bounds_2013}.

\subsection{Memoryless Binary Symmetric Channels with Uniform Inputs}

We now turn our attention back to the setting considered in Section
\ref{sec:binAdd}. Let $Y^n$ be the result of passing $X^n$ through a memoryless binary symmetric channel
with crossover probability $\delta$, $X^n$ uniformly distributed, and
$B\rightarrow X^n\rightarrow Y^n\rightarrow \whB$. Then
$\rho_m(X^n;Y^n)=1-2\delta$ and, from
\eqref{eq:Fano}, when $\EE{B}=1/2$, 
\begin{equation*}
     \Pr\{B\neq \whB\}\geq \delta.
\end{equation*}
Consequently, inferring any unbiased one-bit function of the input of a binary
symmetric channel is at least as hard (in terms of error probability) as
inferring a single output from a single input. 

Using the result from Corollary \ref{cor:MIEstimators}, it follows that when
$\EE{B}=\EE{\whB}=a$ and $\Pr\{B=\whB=0\}\geq a^2 $, then 
\begin{equation}
  \label{eq:Sudeep}
    I(B;\whB)\leq 1-H_b(\delta).
\end{equation}
\begin{remark}
    Anantharam \etal presented in \cite{anantharam_hypercontractivity_2013} a computer aided proof that the upper
    bound \eqref{eq:Sudeep} holds  for any $B\rightarrow
    X^n\rightarrow Y^n\rightarrow \whB$. However, we highlight that the methods
    introduced here allowed an analytical derivation of the inequality
    \eqref{eq:Sudeep}, which, in turn, is a particular case of the more general
    setting studied by Anantharam \etal
 \end{remark}

\section*{Acknowledgement}
The authors would like to thank  Prof. Yury Polyanskiy for the insightful
discussions and suggestions throughout the course of this work.

\bibliography{references}
\bibliographystyle{IEEEtran}

\end{document}